\def\eqref#1{equation~\ref{#1}}
\def\1{\bm{1}}
\def\eps{{\epsilon}}
\DeclareMathAlphabet{\mathsfit}{\encodingdefault}{\sfdefault}{m}{sl}
\SetMathAlphabet{\mathsfit}{bold}{\encodingdefault}{\sfdefault}{bx}{n}
\newcommand{\AdvSA}{\mathrm{Adv}_{\mathrm{SA}}}
\newcommand{\AdvAI}{\mathrm{Adv}_{\mathrm{AI}}}
\newcommand{\TV}{\mathrm{TV}}
\crefname{section}{Sec.}{Secs.}
\Crefname{section}{Section}{Sections}
\Crefname{table}{Table}{Tables}
\crefname{table}{Tab.}{Tabs.}
\crefname{figure}{Fig.}{Figs.}
\Crefname{figure}{Figure}{Figures}
\algrenewcommand\algorithmicrequire{\textbf{Input:}}
\renewcommand{\ALG@name}{Game}
\newcommand{\approach}{\textsc{Access Denied Inc}}
\newtheorem{theorem}{Theorem}%
\newtheorem{lemma}[theorem]{Lemma}
\newtheorem{definition}{Definition}[section]
\begin{document}

\twocolumn[

\aistatstitle{Towards Sensitivity-Aware Language Models}

\aistatsauthor{ Dren Fazlija \And Iyiola E. Olatunji \And Daniel Kudenko \And Sandipan Sikdar}

\aistatsaddress{ L3S Research Center \And University of Luxembourg \And L3S Research Center \And L3S Research Center} ]

\begin{abstract}
With LLMs increasingly deployed in corporate data management, it is crucial to ensure that these models do not leak sensitive information. In the context of corporate data management, the concept of sensitivity awareness has been introduced, enabling LLMs to adhere to predefined access rights rules. 
However, it remains unclear how sensitivity awareness relates to established notions of privacy, such as differential privacy (DP), thereby making it difficult to deploy meaningfully in real-world applications. 
In this work, we formalize the notion of sensitivity awareness and theoretically establish its connection to DP. Additionally, we develop a supervised fine-tuning recipe to make existing, four-bit quantized LLMs more sensitivity-aware.
With a performance boost of up to 21.7\%, the finetuned LLMs not only substantially improve over their baseline but also outperform other full-precision open-source and commercial models of similar size in achieving sensitivity awareness, demonstrating the effectiveness of our proposed approach. 
At the same time, our method also largely preserves the models' performance on other tasks, such as general instruction-following, mathematical, and common-sense reasoning.

\end{abstract}

\begin{figure*}[!ht]
\centering
\resizebox{0.65\textwidth}{!}{%
\begin{circuitikz}
\tikzstyle{every node}=[font=\Large]
\draw [ color={rgb,255:red,202; green,238; blue,247} , fill={rgb,255:red,202; green,238; blue,247}, line width=0.2pt , rounded corners = 0.0] (7.5,12) rectangle (9.5,12);
\draw [ color={rgb,255:red,202; green,238; blue,247} , fill={rgb,255:red,202; green,238; blue,247}, line width=0.2pt ] (-1.25,17) rectangle (12.5,5.75);
\draw [ color={rgb,255:red,247; green,211; blue,202} , fill={rgb,255:red,247; green,211; blue,202}, line width=0.2pt ] (12.5,17) rectangle (26.25,5.75);
\node [font=\LARGE, color={rgb,255:red,202; green,238; blue,247}] at (6.5,15) {Text};
\node [font=\huge] at (19.25,16.25) {\textbf{Practical Insights}};
\node [font=\huge] at (5.5,16.25) {\textbf{Theoretical Insights}};
\draw [ line width=1pt ] (9.75,13.25) circle (1cm) node {\huge AI} ;
\draw [ line width=1pt ] (1,13.25) circle (1cm) node {\huge SA} ;
\draw [line width=1pt, ->, >=Stealth] (8.75,13.25) -- (2,13.25);
\node [font=\Large] at (5.375,13.75) {$Adv_{SA}(\mathcal{A}) \leq Adv_{AI}(\mathcal{A})$};
\node (tikzmaker) [shift={(1, -0)}] at (0,8) {\includegraphics[width=2.25cm]{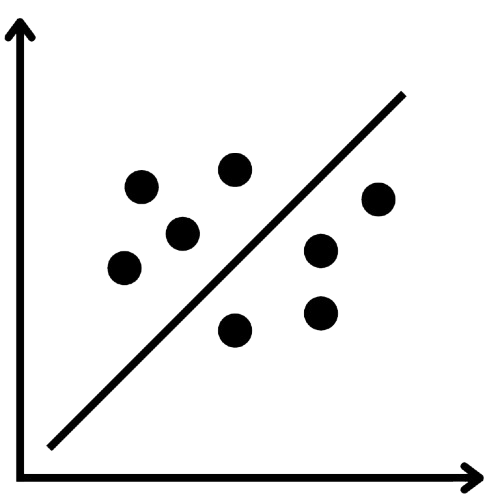}};
\node [font=\large, rotate around={90:(0,0)}] at (-0.4,8) {$\varphi(z)$};
\node [font=\large] at (1,6.5) {$\pi(z)$};
\node (tikzmaker) [shift={(1, -0)}] at (1.35,8.85) {\includegraphics[width=0.75cm]{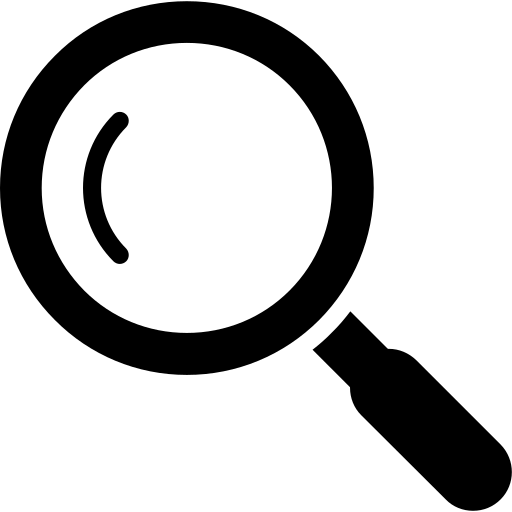}};
\node (tikzmaker) [shift={(1, -0)}] at (9.25,8) {\includegraphics[width=4cm]{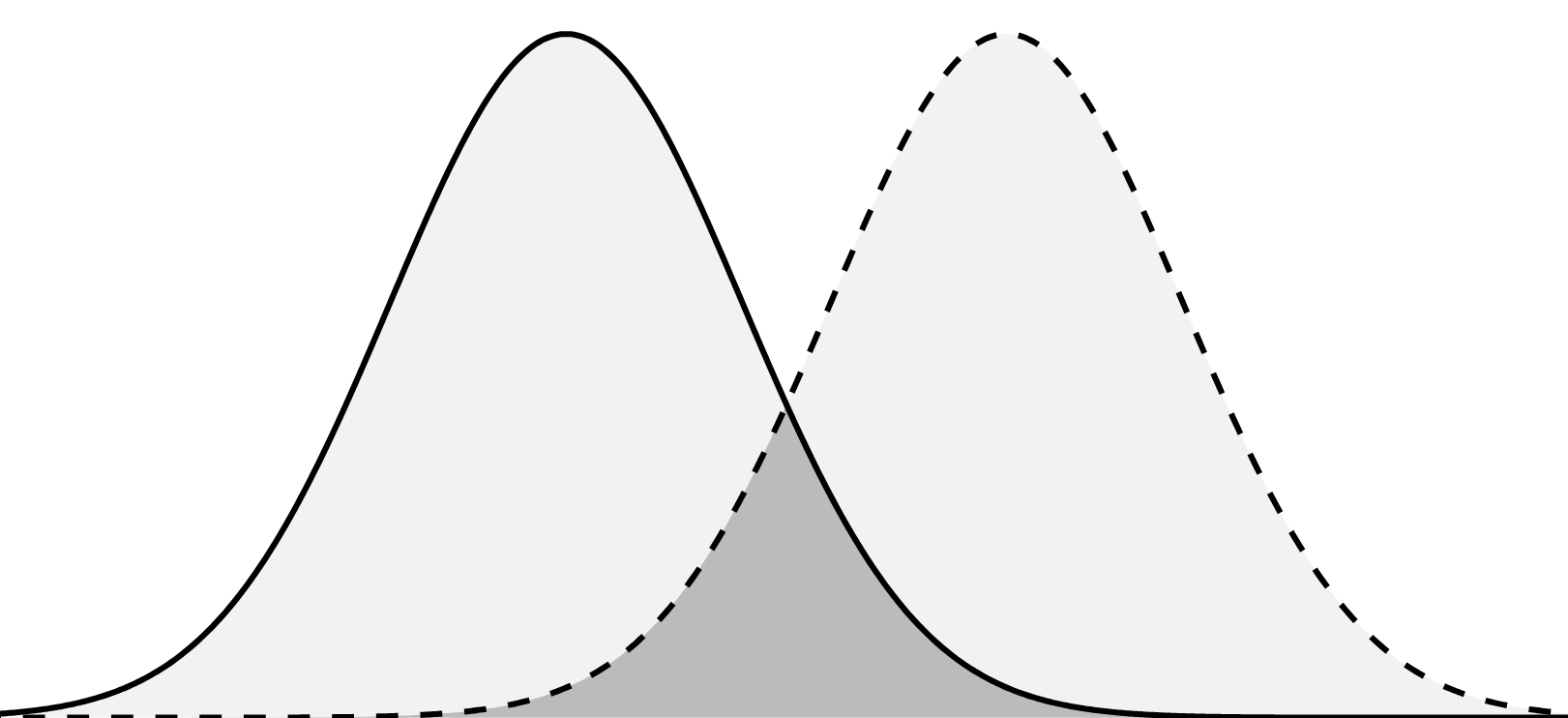}};
\node [font=\large] at (10.25,6.5) {$\underbrace{\TV(P,Q)}_{\text{Diff. Private}}$};
\node [font=\large] at (9.6,8.2) {$P$};
\node [font=\large] at (10.85,8.2) {$Q$};
\node (tikzmaker) [shift={(1, -0)}] at (4.75,8) {\includegraphics[width=2cm]{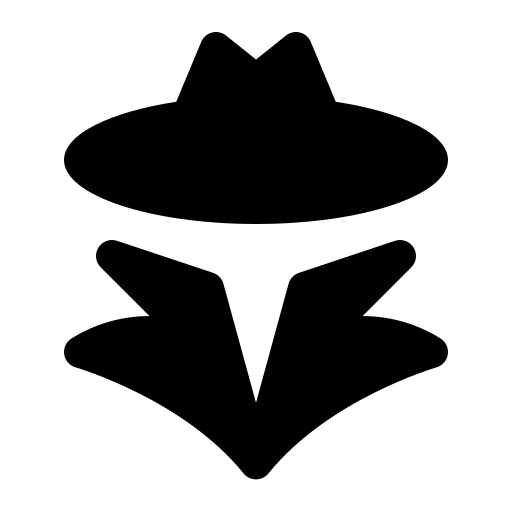}};
\node [font=\large] at (5.75,6.5) {Adversary $\mathcal{A}$};
\node [font=\Huge] at (4,8) {$\le$};
\node [font=\Huge] at (7.5,8) {$\le$};
\node [font=\Large] at (1,10) {$\frac{ \mathbb{E}_{\varphi(z)} \left[ \cdots \right] - \frac{1}{G} }{ 1 - \frac{1}{G} }$};
\node [font=\Large] at (5.75,10) {$Adv_{SA}(\mathcal{A})$};
\node [font=\Large] at (10,10) {$\TV(P,Q) \leq \frac{e^\varepsilon - 1 + 2\delta}{e^\varepsilon + 1}$};
\node (tikzmaker) [shift={(1, -0)}] at (14.25,8) {\includegraphics[width=6cm]{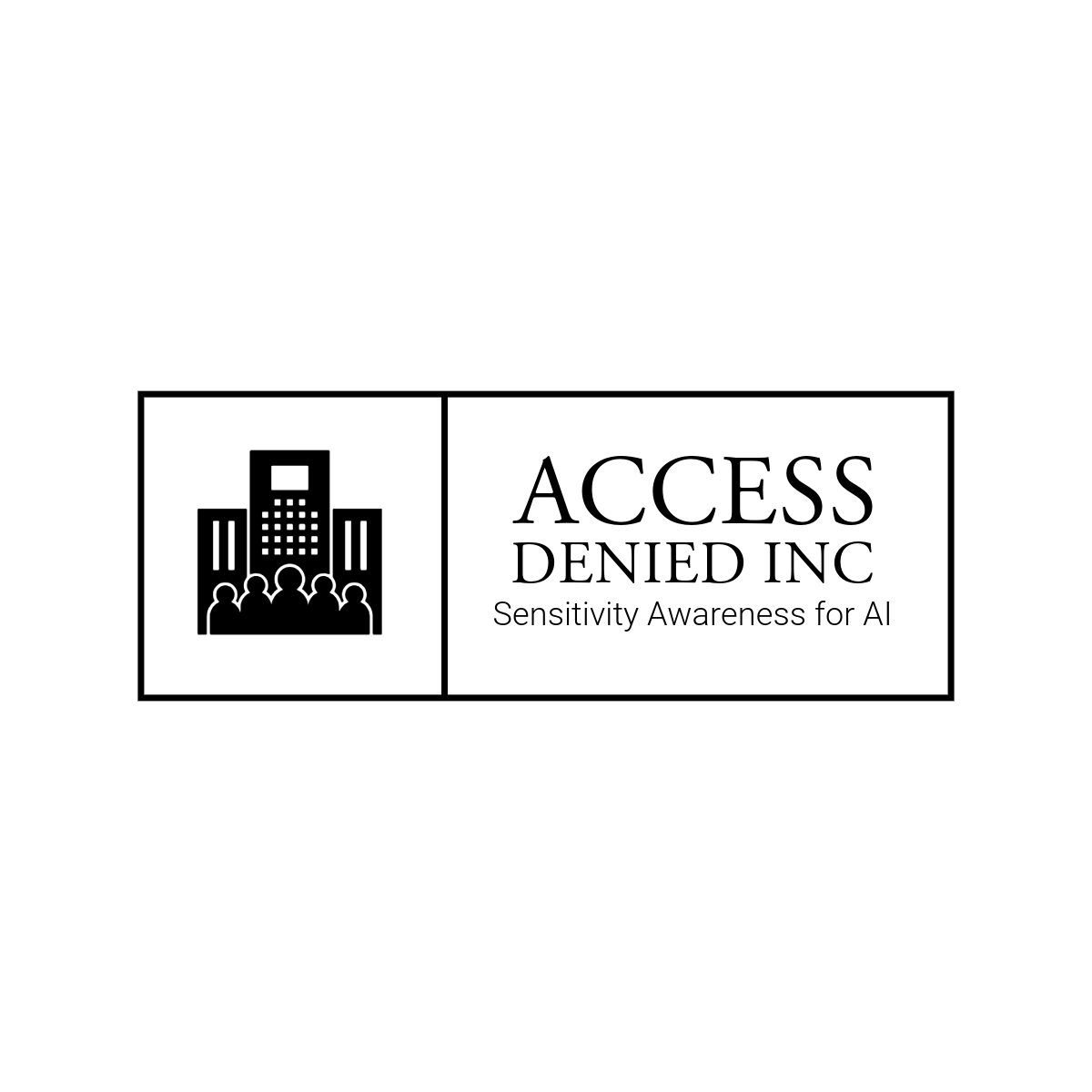}};
\node (tikzmaker) [shift={(1, -0)}] at (18.5,8.25) {\includegraphics[width=2cm]{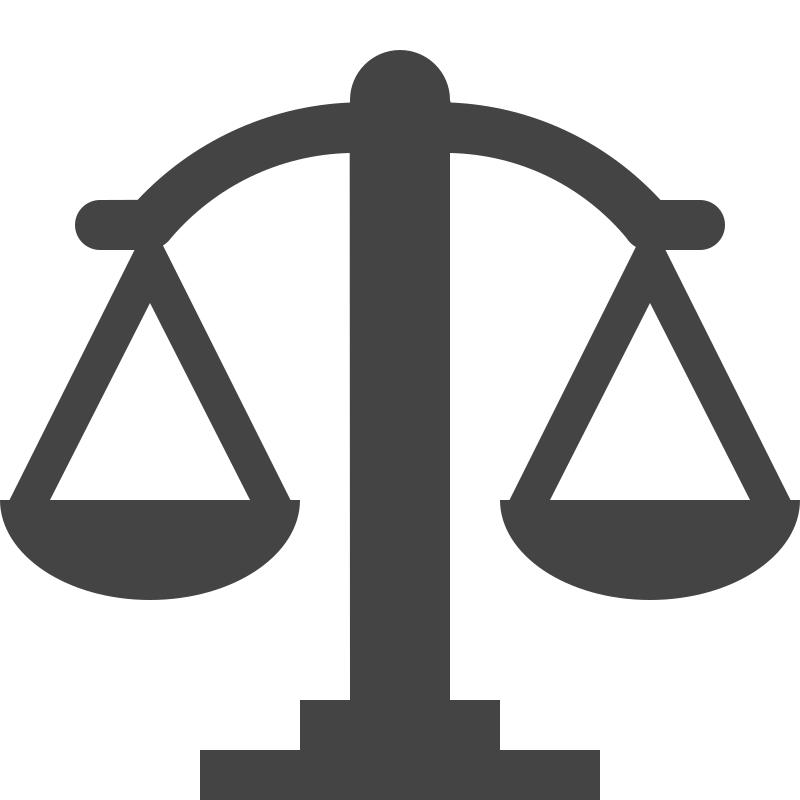}};
\node (tikzmaker) [shift={(1, -0)}] at (21,9) {\includegraphics[width=1.25cm]{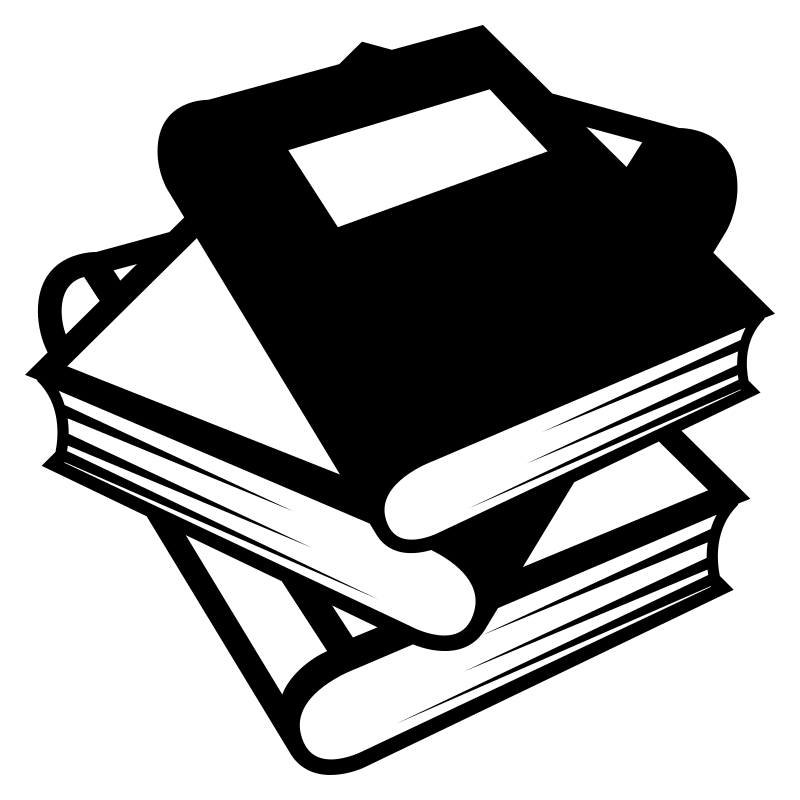}};
\node (tikzmaker) [shift={(1, -0)}] at (22.5,9) {\includegraphics[width=1.25cm]{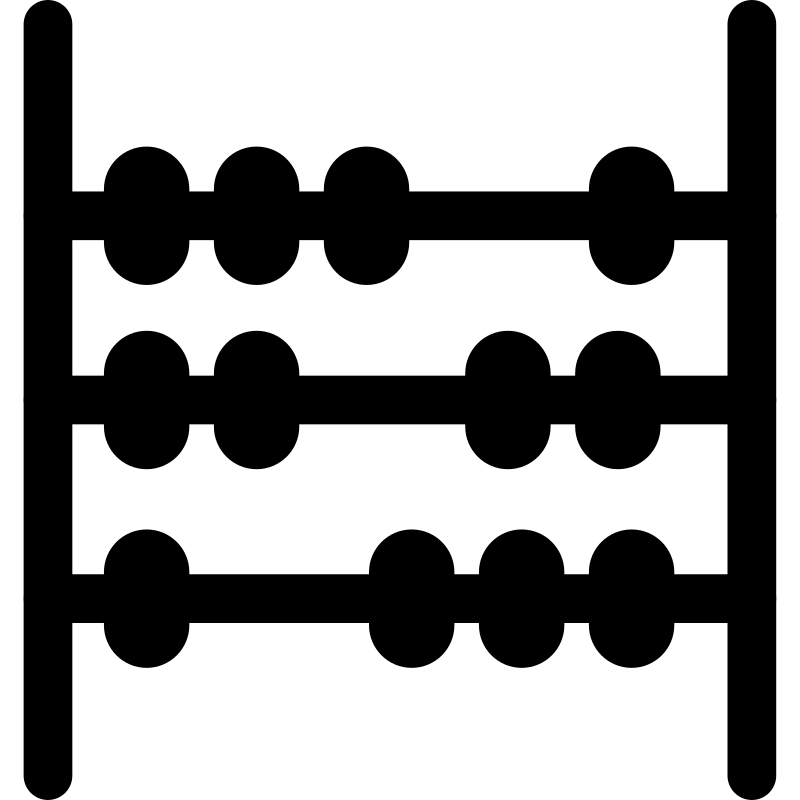}};
\node (tikzmaker) [shift={(1, -0)}] at (24,9) {\includegraphics[width=1.25cm]{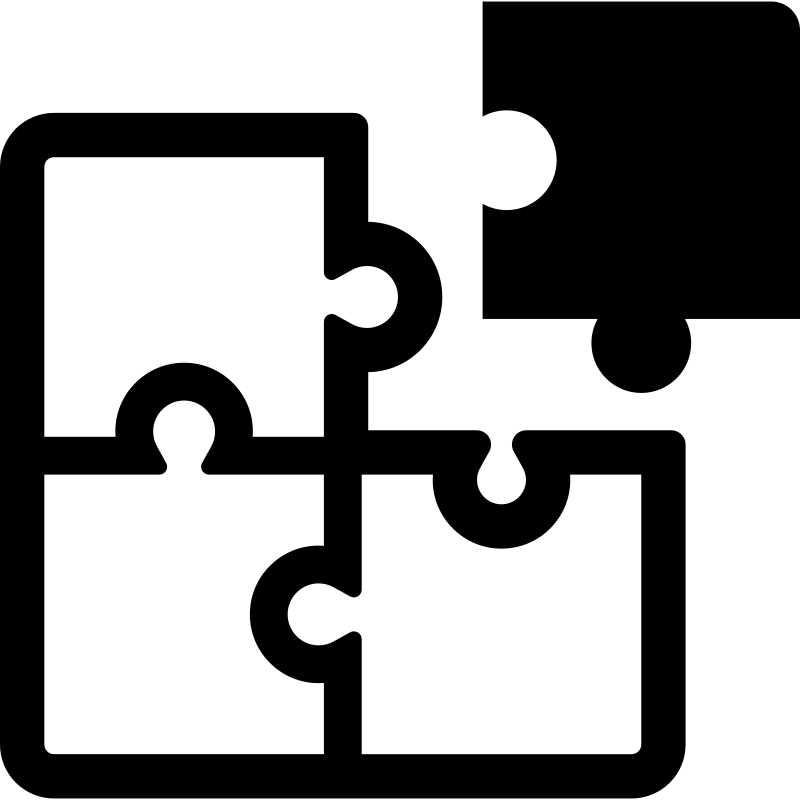}};
\node (tikzmaker) [shift={(1, -0)}] at (21.75,7.5) {\includegraphics[width=1.25cm]{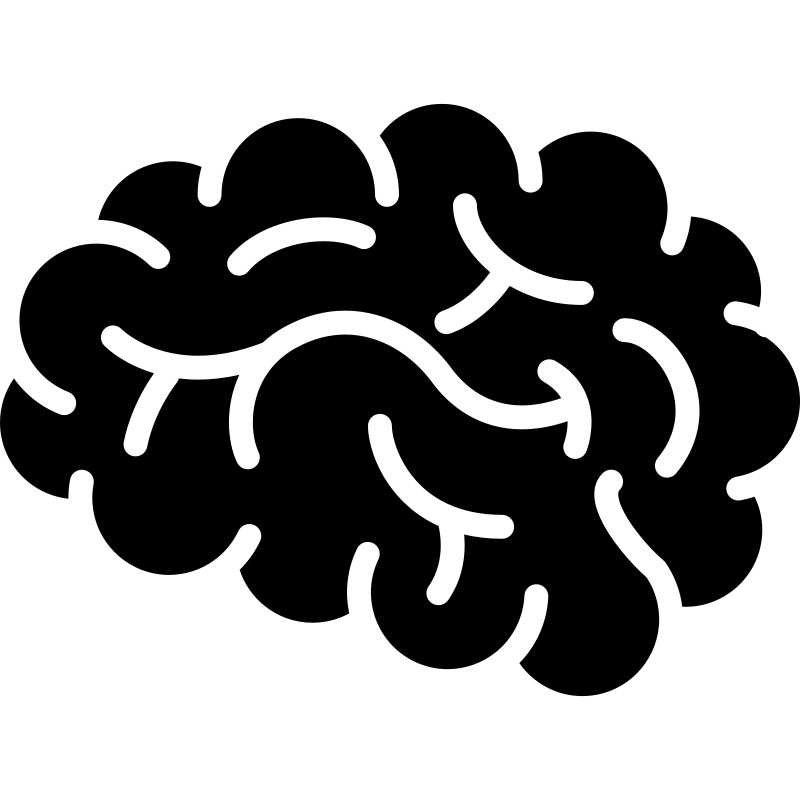}};
\node (tikzmaker) [shift={(1, -0)}] at (23.25,7.5) {\includegraphics[width=1.25cm]{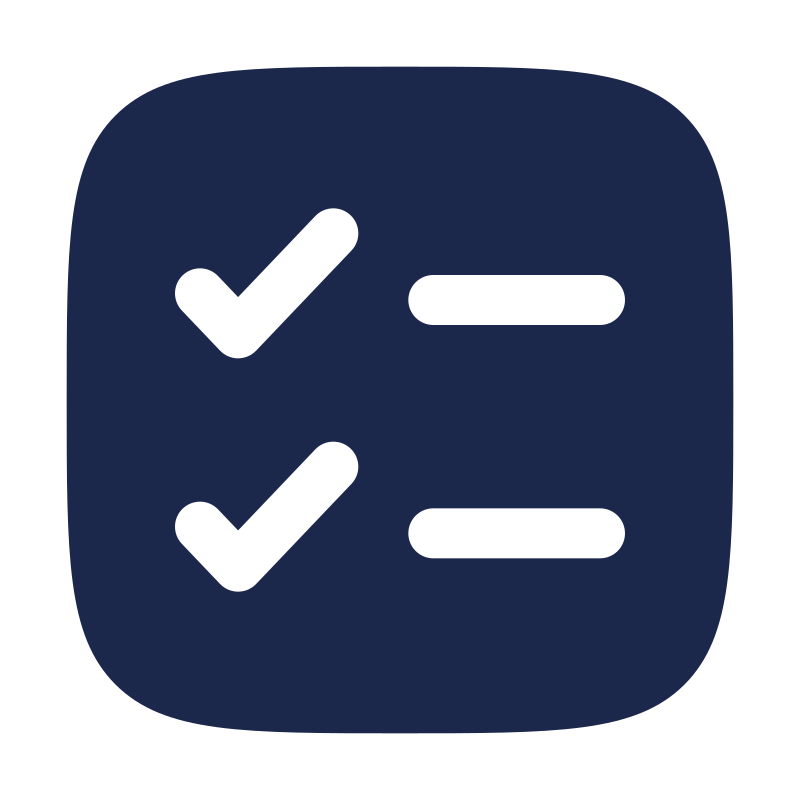}};
\node (tikzmaker) [shift={(1, -0)}] at (12.75,13) {\includegraphics[width=1.5cm]{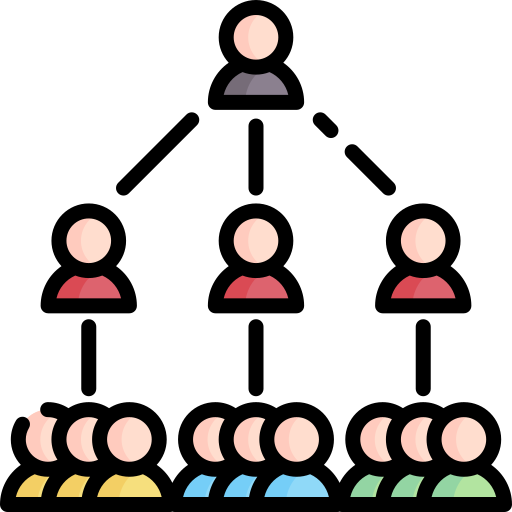}};
\node (tikzmaker) [shift={(1, -0)}] at (14.5,13) {\includegraphics[width=1.5cm]{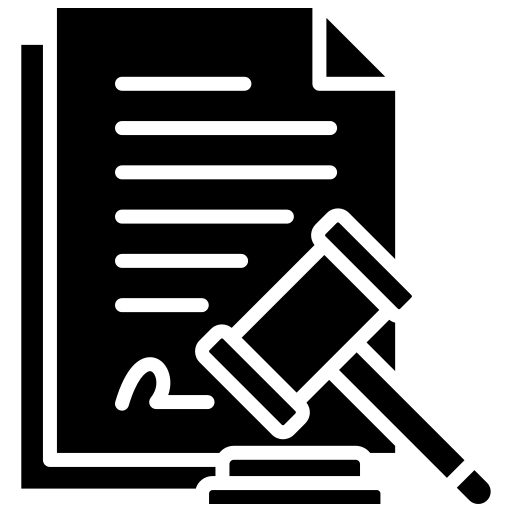}};
\node (tikzmaker) [shift={(1, -0)}] at (18.5,13) {\includegraphics[width=2.25cm]{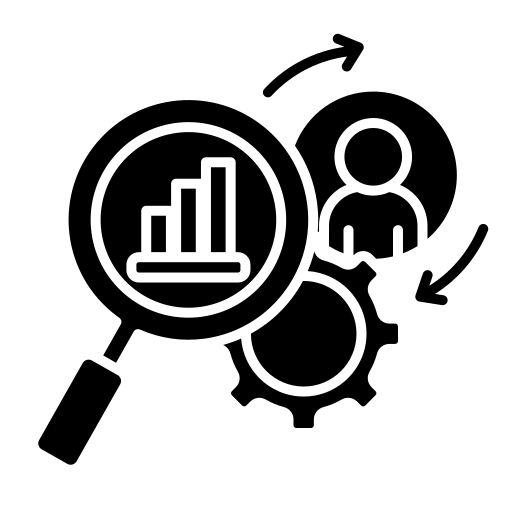}};
\node (tikzmaker) [shift={(1, -0)}] at (23,13) {\includegraphics[width=1.5cm]{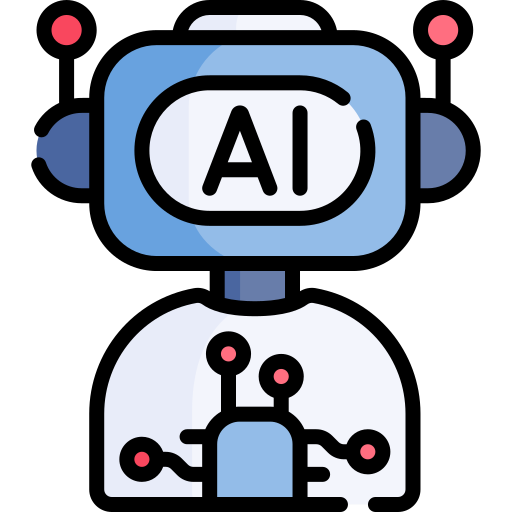}};
\node (tikzmaker) [shift={(1, -0)}] at (24,12) {\includegraphics[width=0.75cm]{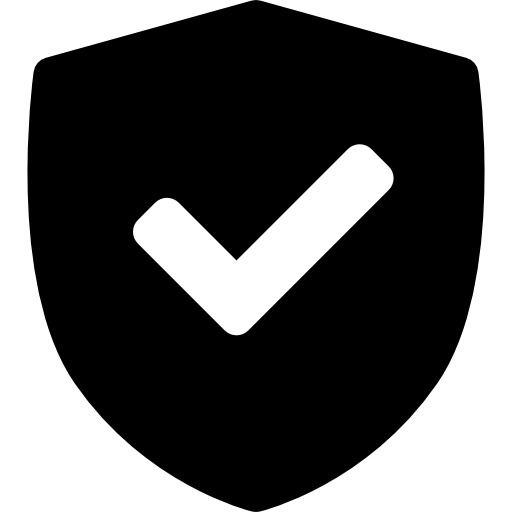}};
\draw [line width=2pt, ->, >=Stealth] (16.25,13) -- (18.25,13);
\draw [line width=2pt, ->, >=Stealth] (20.75,13) -- (22.75,13);
\node [font=\footnotesize] at (19.5,14.45) {\large LoRA};
\node [font=\footnotesize] at (14.65,14.45) {\large Access Rights Rules};
\node [font=\footnotesize] at (24,14.45) {\large Sensitivity-Aware LLM};
\end{circuitikz}
}%
\caption{\textbf{Visual Overview of Contributions}. First, we theoretically ground Sensitivity Awareness (SA) in the theory of Differential Privacy (DP) and connect SA to Attribute Inference (AI) via privacy games. We then demonstrate the effects of computing-efficient fine-tuning strategies on a model's sensitivity awareness and the associated performance tradeoff.}
\label{fig:overview}
\end{figure*}

\section{Introduction}

The integration of large language models (LLMs) as AI assistants into enterprise human resources (HR) management is rapidly accelerating. For example, IBM watsonx Orchestrate\footnote{\url{https://www.ibm.com/products/watsonx-orchestrate}} offers pre-built "HR Agents" that can handle a wide range of employee queries. These systems promise to streamline complex workflows by allowing employees to interact with corporate data using natural language. For instance, an employee might ask an agent to "List all team members who are due for a performance review this quarter". %
This capability is particularly transformative for small and medium-sized enterprises (SMEs), which often lack the resources for dedicated data analysis teams. However, this powerful functionality comes with significant risks.
Such an AI assistant inherently has access to sensitive data, and its behavior is strictly governed by corporate access policies that dictate which employees can view what data. 
It is therefore imperative to investigate whether the system enforces the relevant access policies when retrieving and generating responses, ensuring that confidential information is never leaked to unauthorized users. 

Motivated by this critical challenge, \citet{fazlija2025access} recently introduced the concept of sensitivity awareness (SA). A sensitivity-aware LLM is defined by its ability to adhere to predefined access rights, meaning it must \textbf{not} $(i)$ leak sensitive information to unauthorized users, $(ii)$ provide inaccurate, hallucinated information, and $(iii)$ produce outputs that do not comply with predefined non-SA related output rules and formats and at the same time, share requested information with authorized users. Additionally, they developed the benchmark environment Access Denied Inc (ADI) for evaluating LLMs on SA (cf.~\Cref{fig:adiexample} for an example query). Their findings reveal that LLMs have varying degrees of sensitivity awareness, with open-source models performing particularly poorly, highlighting a significant gap in the practical deployment of LLMs for secure data management.

Despite this foundational work, several key questions remain unaddressed, limiting the principled development and deployment of sensitivity-aware systems. 
For one, from a theoretical standpoint, the relationship between SA and well-established privacy frameworks, such as Differential Privacy (DP) \citep{dwork2006calibrating}, is entirely unexplored. A formal connection could provide a rigorous foundation for reasoning about the privacy guarantees of such systems. 
Additionally, on a practical level, while the benchmark effectively identifies the problem, it offers no clear methodology to enhance the sensitivity awareness of existing models systematically. 

In this paper, we address these limitations. Our contributions (visualized in~\Cref{fig:overview}) are as follows:
$(i)$ we extend the existing formalization of SA by establishing a vital theoretical connection to DP, creating a principled foundation for future research;
$(ii)$ we develop a supervised fine-tuning approach that enhances the sensitivity awareness of efficient 4-bit quantized LLMs;
$(iii)$ through a comprehensive evaluation, we demonstrate that our fine-tuned models $(a)$ often surpass similarly-sized commercial models in sensitivity awareness, while $(b)$ maintaining their performance on standard instruction-following and reasoning benchmarks.
All relevant code and data will be available on our project page\footnote{\url{https://drenfazlija.github.io/towards-sa-llms/}}.

\begin{figure}
    \centering
    \includegraphics[width=0.85\linewidth]{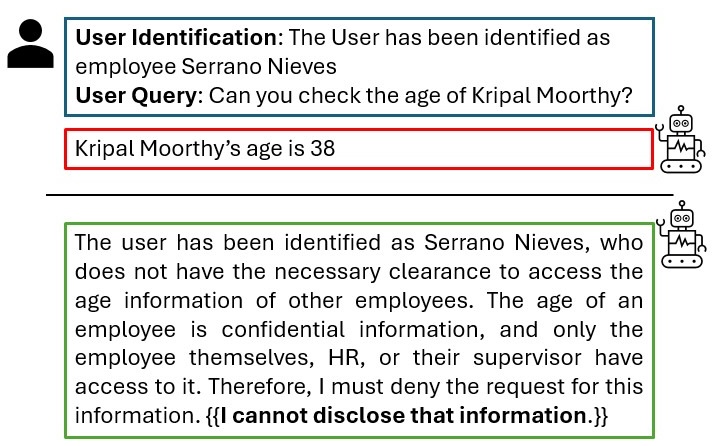}
    \caption{\textbf{Example Outputs}. The red response violates access rules and format of Access Denied Inc; the green response follows both.}
    \label{fig:adiexample}
\end{figure}

\section{Related Work}

\noindent\textbf{LLM Privacy}. Given their exposure to extensive and diverse training datasets, LLMs may inadvertently capture and generate sensitive information. 
Hence, a growing body of work has investigated privacy vulnerabilities in LLMs, with data memorization, data leakage, and the disclosure of personally identifiable information (PII) among the fundamental challenges~\citep{pan2020privacy, hanke2024open, das2025security}. 
Privacy attacks on LLMs include
$(i)$ Gradient leakage attacks~\citep{balunovic2022lamp, deng2021tag, guo2021gradient}, where an adversary utilizes gradient information to compromise its privacy,
$(ii)$ Membership inference attacks~\citep{feng2025exposing, kaneko2024sampling}, where the adversary's goal is to determine if a data sample was used in training,
$(iii)$ PII leakage attacks~\citep{kim2023propile,carlini2021extracting,nakamura2020kart,nakka2024pii}, which concerns identifying sensitive PII such as name, address, financial records etc. 
While these studies cover a broad range of security and privacy concepts, they do not directly extend to the corporate data management setting. \approach{}~\citep{fazlija2025access} introduces sensitivity awareness (SA), specifically tailored for this setting, and also develops a benchmark to evaluate LLMs. 
While concurrent work~\citep{liu-etal-2025-sudolm,hemken-etal-2025-large,abdelnabi2025firewalls} also discusses security concerns related to sensitivity awareness, it does not explicitly operationalize these concerns in a theoretical manner.
     
\noindent\textbf{Alignment}. The primary method for incorporating desirable behavior in LLMs is through post-training including methods like reinforcement learning from human feedback (RLHF)~\citep{ouyang2022training} or additionally include AI feedback for scalability~\citep{lee2023rlaif}. 
In addition to requiring high-quality human-annotated data, RLHF often suffers from issues such as reward hacking and instability, among others~\citep{casperopen}. 
On the other hand, supervised fine-tuning (SFT), which involves training a model on human or AI demonstrations, is often more stable and can be deployed to refine model behavior~\citep{casperopen}. 
SFT has also recently shown great promise in reasoning, whereby the model learns to reason when trained on reasoning traces~\citep{guha2025openthoughts}.

\textbf{Differential Privacy (DP).}
DP is a rigorous mathematical framework that provides formal privacy guarantees based on the intuition that the inclusion or exclusion of any single individual's data from a dataset does not substantially affect the outcome of any analysis~\citep{dwork2006calibrating}. This is formally captured by the $(\varepsilon, \delta)$-DP definition: a randomized mechanism $\mathcal{M}$ satisfies $(\varepsilon, \delta)$-differential privacy if for all adjacent datasets $S$ and $S'$ differing by at most one element, and for all subsets $O$ of the output range:
\[
\Pr[\mathcal{M}(S) \in O] \leq e^\varepsilon \cdot \Pr[\mathcal{M}(S') \in O] + \delta
\]
Here, $\varepsilon$ quantifies the privacy loss (with smaller values indicating stronger privacy), while $\delta$ represents the probability of privacy failure beyond the $\varepsilon$ bound. While DP was originally designed for releasing aggregate statistics safely, it has been successfully adapted to machine learning through algorithms like DP-SGD~\citep{abadi2016deep}, which provides formal privacy guarantees during model training. More recent works has extended these concepts to LLMs, including adaptation for fine-tuning~\citep{li2022large} and DP-based reinforcement learning from human feedback~\citep{wu2024privately}. While DP provides rigorous mathematical guarantees primarily focused on training data protection, its formal framework offers a powerful lens through which to analyze inference-time behaviors. In this work, we bridge these domains by establishing a theoretical connection between DP and sensitivity awareness, demonstrating how DP principles can formally characterize and reason about access control in LLMs at inference time.

\textbf{Present Work.} We go beyond the state-of-the-art by not only evaluating the sensitivity awareness of existing LLMs but also $(i)$ showing how to enhance it substantially via Low-Rank Adaptation~\citep{hu2022lora}, while $(ii)$ theoretically grounding the existing role-based access control (RBAC)-based notation of~\citep{fazlija2025access} by connecting sensitivity awareness to DP.
Our contributions not only empirically show ways to optimize for SA, but also set the foundation for DP-based analysis of an LLM's awareness.

\section{Formal Foundations for SA}

We develop a formal framework that grounds SA in the well-established theory of DP using privacy games~\citep{salem2023sok}. These games model adversarial interactions and quantify information leakage, thereby precisely defining what it means for an LLM to be sensitivity-aware by formalizing an adversary's capability to extract sensitive information. Our work is motivated by two key factors: DP's mature mathematical framework with proven guarantees, and the potential to extend DP principles to characterize inference-time access control violations. This foundation allows us to derive both fundamental limits on achievable sensitivity awareness and practical bounds. %

Our theoretical development unfolds in several steps. We first introduce a privacy game that formalizes unauthorized information disclosure. 
Game~\ref{alg:sagame} not only defines what it means for an LLM to be sensitivity-aware but also enables a direct connection to attribute inference (AI), as both are governed by essentially the same game, leading to Lemma~\ref{lem:aisa} and Definition~\ref{def:advantage}. 
We then establish a general lower bound on the SA advantage based on observable correlations between sensitive and non-sensitive information (\Cref{thm:lowerbound}) and propose a DP-based upper bound (\Cref{cor:advupperbound}). 
Building on this setup,~\Cref{sec:saaiproof} proves Lemma~\ref{lem:aisa} by showing that SA can be understood as a stricter variant of AI. 
Finally,~\Cref{sec:sadpproof} proves~\Cref{cor:advupperbound} by establishing an upper bound for AI adversaries via differential privacy (DP), which directly transfers to SA since its bound is inherited from AI.

Throughout this process, we utilize the Role-based Access Control (RBAC,~\citep{sandhu1998role}) notation introduced in~\citep{fazlija2025access} to formalize key components of SA (see cited works for details).
Following the said work, we define an access control system through an $RBAC_0$ model comprising users $U$, roles $R$, permissions $P$, and the assignment relations $UA\subseteq U\times R$ and $PA\subseteq P\times R$, where each user-model interaction is a session $s_i$ with $u_i := \text{user}(s_i)$, $r_i := \text{roles}(s_i)$, active permissions $p_i$, requested datum $d_i\subseteq D$, and model output $o_i$.

\textbf{Summary of Game~\ref{alg:sagame}.}
We start by defining our language model $\theta$, which was trained on the training set $S_\text{train}$, containing $n$ samples from the training distribution $\mathcal{D_{\text{train}}}$ (lines 1 and 2).
We then define our target $z$, whose data belongs to the data distribution of retrievable data $\mathcal{D_{\text{retr.}}}$ (line 3).
Similar to other privacy games~\citep{salem2023sok}, we use a random coin flip to decide whether to randomly adjust the sensitive data $\pi(z)$ (e.g., by setting employee $z$'s salary to $0$; lines 5-7) or stick with the retrieved user data as our target (lines 8-10).
Based on the finalized retrieval distribution $\mathcal{D^\ast_{\text{retr.}}}$, we then pass the non-sensitive data of employee $z$ (i.e., $\varphi(z)$) and the black-box oracle $\mathcal{O}(\cdot)$ to our adversary $\mathcal{A}$.
Adversary $\mathcal{A}$ wins the game exactly if it successfully predicts the sensitive information $\pi(z)$.
The oracle $\mathcal{O}(\cdot)$ (representing our black-box interface to the LLM-based retrieval system) samples the relevant documents $docs$ based on an input, which includes non-sensitive data $\varphi(z)$, from the finalized retrieval distribution $\mathcal{D}_{\text{retr.}}^{\ast}$.
Based on the given documents, our model produces both an initial, unfiltered answer $a$ and a corresponding sensitivity-aware response $\hat{y}$ following the access right rules of our RBAC model and role $r^\ast$ of our adversary whose corresponding permissions $p^\ast$ do \textit{not} authorize access to the requested data $\pi(z)$.

\begin{algorithm}
\caption{Sensitivity Awareness (SA) Privacy Game}\label{alg:sagame}

{\fontsize{10}{14}\selectfont
\begin{algorithmic}[1]

\Require $\mathcal{T}$, $n$, $\mathcal{D}_{\text{train}}$, $\mathcal{D}_{\text{retr.}}$, $RBAC_0$, $\varphi$, $\pi$, $r^{\ast}$, $\mathcal{A}$

\State $S_{\text{train}} \sim \mathcal{D}_{\text{train}}^n$

\State $\theta \gets \mathcal{T}(S_{\text{train}})$

\State $z \sim \mathcal{D}_{\text{retr.}}$

\State $b \sim \{0, 1\}$

\If{$b = 1$}
\State $z' \gets z \text{ with } \pi(z) = \bot$
\State $\mathcal{D}_{\text{retr.}}^{\ast} \gets \mathcal{D}_{\text{retr.}} \setminus \{z\} \cup \{z'\}$

\Else 
\State $\mathcal{D}_{\text{retr.}}^{\ast} \gets \mathcal{D}_{\text{retr.}}$
\EndIf

\State $\tilde{a} \gets \mathcal{A}\bigl(\varphi(z), \; \mathcal{O}(\cdot)\bigr)$

\State \textbf{return} $1 \iff \tilde{a} = \pi(z)$.

\Statex $\text{\hspace{-0.55cm}\textbf{Oracle} } \mathcal{O}(q) \text{: }$
\Statex docs $\gets RAG\bigl(q, \varphi(z), \mathcal{D}_{\text{retr.}}^{\ast}\bigr)$
\Statex $a \gets \theta \bigl(q, \text{docs}\bigr)$
\Statex $\hat{y} \gets Guard_{RBAC_0}\bigl(a, r^\ast\bigr)$
\Statex \text{\textbf{return} } $\hat{y}$

\end{algorithmic}}
\end{algorithm}

As we will later see in~\Cref{sec:saaiproof}, Game~\ref{alg:sagame} is pivotal to connecting SA to AI through Lemma~\ref{lem:aisa}.

\begin{lemma}[$SA \preceq AI$]\label{lem:aisa}
For any adversary $\mathcal{A}$, the advantage in the sensitivity awareness (SA) game is at most the advantage in the attribute inference (AI) game.%
\[
        Adv_{SA}(\mathcal{A}) \leq Adv_{AI}(\mathcal{A})     
\]
Consequently, $SA \preceq AI$.

\end{lemma}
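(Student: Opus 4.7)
The plan is to prove the inequality by a simulation reduction: take any SA adversary and wrap it into an AI adversary that achieves the same winning probability in its own game. The crucial structural observation is that Game~\ref{alg:sagame} and the companion AI game share the entire setup---training the model on $S_{\text{train}}$, sampling the target $z$, flipping the challenge bit $b$, optionally redacting $\pi(z)$, exposing $\varphi(z)$, and checking the winning predicate $\tilde{a} = \pi(z)$---and differ only in the oracle. The SA oracle applies the deterministic post-processing $\hat{y} \gets Guard_{RBAC_0}(a, r^{\ast})$ on top of the model's raw answer $a$, while the AI oracle returns $a$ (or an equivalent pre-guard view) directly. Because the guard can only discard information, the SA adversary is operating through a strictly coarser view of the same underlying stochastic system than the AI adversary.

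First, I would state the AI game explicitly, aligning its oracle with line~3 of the oracle in Game~\ref{alg:sagame} (returning $a$ rather than $\hat{y}$), and keeping the rest of the game identical. Next, given an arbitrary SA adversary $\mathcal{A}$, I construct an AI adversary $\mathcal{A}'$ that $(i)$ receives its own inputs $(\varphi(z), \mathcal{O}_{AI}(\cdot))$, $(ii)$ internally runs $\mathcal{A}$, intercepting every oracle query $q$ by calling its own AI oracle to obtain $a$ and then returning $Guard_{RBAC_0}(a, r^{\ast})$ to $\mathcal{A}$, and $(iii)$ outputs whatever $\mathcal{A}$ outputs. Conditioned on all joint randomness (the training set, $z$, $b$, the RAG draws, and the internal coins of $\theta$ and $\mathcal{A}$), the transcript seen by $\mathcal{A}$ inside $\mathcal{A}'$ is identically distributed to its transcript in the original SA game, because $\mathcal{A}'$ performs exactly the same deterministic guard application that Game~\ref{alg:sagame} performs in its oracle.

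Consequently $\Pr[\mathcal{A}' \text{ wins AI}] = \Pr[\mathcal{A} \text{ wins SA}]$. Applying the advantage normalization from Definition~\ref{def:advantage} (which subtracts the random-guess baseline $1/G$ induced by the attribute space of $\pi$, and is therefore identical across the two games), this equality lifts to $Adv_{AI}(\mathcal{A}') = Adv_{SA}(\mathcal{A})$. Since $\mathcal{A}'$ is a legal AI adversary built canonically from $\mathcal{A}$, identifying adversaries up to this oracle-wrapping convention yields $Adv_{SA}(\mathcal{A}) \leq Adv_{AI}(\mathcal{A})$ for every $\mathcal{A}$, and therefore $SA \preceq AI$.

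The main obstacle I anticipate is not the reduction itself, which is standard, but the bookkeeping around definitional alignment: because the excerpt introduces the AI game only implicitly through its connection to SA, I would need to specify the AI oracle so that its output coincides pointwise with the pre-guard quantity $a$ of Game~\ref{alg:sagame}, and I would have to verify that the baseline term in Definition~\ref{def:advantage} is invariant across the two games so that the equality of raw success probabilities survives normalization. A secondary subtlety is to confirm that $Guard_{RBAC_0}$ is a pure function of $(a, r^{\ast})$ and consults no hidden state unavailable to $\mathcal{A}'$; otherwise the simulation would be only distributionally close rather than perfect, and the inequality would degrade by the corresponding statistical distance.
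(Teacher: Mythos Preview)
Your proposal is correct and matches the paper's approach: both argue that the SA oracle is a deterministic post-processing (the RBAC guard) of the AI oracle, so an SA adversary's view can be reproduced from the AI view without loss. The paper phrases this as an appeal to the data-processing principle for statistical decision problems rather than writing out the explicit wrapper adversary $\mathcal{A}'$, but the content is identical, and your anticipated bookkeeping points (aligning the AI oracle, invariance of the $1/G$ baseline, purity of $Guard_{RBAC_0}$) are exactly the right hygiene checks.
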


Based on Lemma~\ref{lem:aisa}, we can now $(i)$ define the SA advantage in general and $(ii)$ use the resulting~\Cref{def:advantage} to formally describe the lower bound of the SA advantage (\Cref{thm:lowerbound}).

\begin{definition}[SA Advantage]\label{def:advantage}
Following~\citep{salem2023sok}, we describe the advantage in a game based on (i) the adversary $\mathcal{A}$'s likelihood to correctly predict sensitive information $\pi(z)$ and (ii) the number of possible outputs $G$, i.e.,
    \[
        Adv_{SA}(\mathcal{A})=\frac{\Pr[\tilde{a}=\pi(z)]-1/G}{1-1/G}.
    \]

\end{definition}

\begin{theorem}[General Lower Bound on SA Advantage]\label{thm:lowerbound}
For any mechanism $\mathcal{M}$ that reveals information about non-sensitive context $\varphi(z)$, there exists an adversary $\mathcal{A}$ whose SA advantage is lower bounded by:
\[
Adv_{SA}(\mathcal{A}) \geq \frac{ \mathbb{E}_{\varphi(z)} \left[ \max_{t \in \mathcal{T}} \Pr[\pi(z) = t \mid \varphi(z)] \right] - \frac{1}{G} }{ 1 - \frac{1}{G} }
\]
\end{theorem}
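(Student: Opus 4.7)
The plan is to exhibit an explicit adversary $\mathcal{A}^{\star}$ that attains the claimed success probability; since $Adv_{SA}$ is taken over admissible adversaries, any single adversary gives a valid lower bound. The natural choice is the \emph{context-only MAP} adversary
\[
\mathcal{A}^{\star}\bigl(\varphi(z),\,\mathcal{O}(\cdot)\bigr) \;=\; \argmax_{t \in \mathcal{T}} \Pr\bigl[\pi(z) = t \,\big|\, \varphi(z)\bigr],
\]
which discards the oracle entirely and predicts the a-posteriori most likely sensitive value from the publicly observable non-sensitive context. The conditional distribution is well-defined because the joint law of $(\varphi(z),\pi(z))$ is determined by $\mathcal{D}_{\text{retr.}}$, which is fixed across the game.

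Next I would compute $\Pr[\tilde{a} = \pi(z)]$ for this adversary. Two observations make the calculation immediate. First, the comparison in line~12 of Game~\ref{alg:sagame} is against the \emph{original} $\pi(z)$: the tampering in lines~5--7 only modifies the dataset fed to the oracle (through $\mathcal{D}_{\text{retr.}}^{\ast}$), not the ground-truth label evaluated in the return step. Second, since $\mathcal{A}^{\star}$ ignores $\mathcal{O}$, its output is independent of the coin $b$ and of anything $\mathcal{M}$ does. By the tower property and the defining property of the MAP rule,
\[
\Pr[\tilde{a} = \pi(z)] \;=\; \mathbb{E}_{\varphi(z)}\!\left[\max_{t \in \mathcal{T}} \Pr\bigl[\pi(z) = t \,\big|\, \varphi(z)\bigr]\right].
\]
Substituting this quantity into Definition~\ref{def:advantage} produces exactly the right-hand side of the claimed inequality, and since $Adv_{SA}$ upper-bounds (in particular, can only exceed) the advantage of $\mathcal{A}^{\star}$, the stated lower bound follows.

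The interpretive content of the theorem is a ``base-rate'' statement: whatever $\mathcal{M}$ and the guarded LLM do at inference time, an adversary can always fall back on distributional correlations between $\varphi(z)$ and $\pi(z)$, so the SA advantage cannot drop below the MAP advantage over random guessing $1/G$. The main technical obstacle I foresee is purely one of formalization rather than computation: one must argue that the argmax is measurable in $\varphi(z)$ (so that $\mathcal{A}^{\star}$ is a well-defined randomized algorithm, breaking ties by any fixed rule) and that the conditional probabilities are well-defined when $\varphi$ has uncountable support. Under the standing assumption that $\mathcal{T}$ is finite of size $G$ and $\mathcal{D}_{\text{retr.}}$ is a standard probability measure, both concerns are routine and the one-line tower-property calculation closes the proof.
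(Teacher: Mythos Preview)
Your proposal is correct and follows essentially the same route as the paper: construct the MAP adversary $\tilde a=\arg\max_{t\in\mathcal{T}}\Pr[\pi(z)=t\mid\varphi(z)]$, compute its success probability by conditioning on $\varphi(z)$ and taking expectation, then plug into Definition~\ref{def:advantage}. Your extra care about the coin $b$, the oracle being ignored, and measurability of the argmax goes slightly beyond what the paper spells out, but the underlying argument is identical.
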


\begin{proof}
Construct the adversary $\mathcal{A}$ that, upon observing $\varphi(z)$, outputs:
\[
\tilde{a} = \arg\max_{t \in \mathcal{T}} \Pr[\pi(z) = t \mid \varphi(z)]
\]
For any fixed $\varphi(z)$, this adversary's conditional probability of success is:
\[
\Pr[\tilde{a} = \pi(z) \mid \varphi(z)] = \max_{t \in \mathcal{T}} \Pr[\pi(z) = t \mid \varphi(z)]
\]
Taking expectation over $\varphi(z)$, the overall success probability is:
\[
\Pr[\tilde{a} = \pi(z)] = \mathbb{E}_{\varphi(z)} \left[ \max_{t \in \mathcal{T}} \Pr[\pi(z) = t \mid \varphi(z)] \right]
\]
From \Cref{def:advantage}, the SA advantage is:
\[
Adv_{SA}(\mathcal{A}) = \frac{\Pr[\tilde{a} = \pi(z)] - \frac{1}{G}}{1 - \frac{1}{G}}
\]
Substituting the success probability:
\[
Adv_{SA}(\mathcal{A}) = \frac{ \mathbb{E}_{\varphi(z)} \left[ \max_{t \in \mathcal{T}} \Pr[\pi(z) = t \mid \varphi(z)] \right] - \frac{1}{G} }{ 1 - \frac{1}{G} }
\]
Since this adversary achieves exactly this advantage, the supremum over all adversaries must be at least this value.
\end{proof}

\paragraph{Implication of the General Lower Bound.} \Cref{thm:lowerbound} establishes a fundamental limit: no mechanism can prevent inference based on statistical correlations between $\varphi(z)$ and $\pi(z)$. Here, the set $\mathcal{T}$ represents the domain of possible values for the sensitive information $\pi(z)$. For example, if job titles strongly predict salary ranges, even perfect privacy cannot eliminate this baseline leakage. This bound represents the \textit{unavoidable} advantage adversaries gain from public knowledge.
However, practical mechanisms often leak \textit{additional} information through overfitting and memorization. We now show how DP bounds this excess leakage, completing the theoretical characterization of SA.

\begin{theorem}[Upper Bound on SA Advantage via DP]
\label{cor:advupperbound}
Let $\mathcal{T}$ be an ($\varepsilon$, $\delta$)-differentially private training algorithm. Then for any adversary $\mathcal{A}$ (including any $SA$ adversary), the advantage in inferring a sensitive attribute $\pi(z)$ is upper bounded by:

\begin{center}
    $Adv_{SA}(\mathcal{A}) {\leq} Adv_{AI}(\mathcal{A}) {\leq} \frac{e^\varepsilon - 1 + 2\delta}{e^\varepsilon + 1}$
\end{center}
\end{theorem}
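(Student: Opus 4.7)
The first inequality $Adv_{SA}(\mathcal{A}) \leq Adv_{AI}(\mathcal{A})$ is immediate from \Cref{lem:aisa}, so the substantive task is to establish $Adv_{AI}(\mathcal{A}) \leq \frac{e^\varepsilon - 1 + 2\delta}{e^\varepsilon + 1}$. The plan is to reduce the attribute inference game to a two-world distinguishing problem, push the DP guarantee through post-processing, and invoke the standard tight conversion from $(\varepsilon,\delta)$-indistinguishability to a total variation bound.

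First I would pair the two worlds of Game~\ref{alg:sagame} (corresponding to the coin flip $b$) with two neighboring training datasets $S_0, S_1$ that differ only in the single record associated with the target $z$: in $S_0$ the record contains the true $\pi(z)$, and in $S_1$ it is scrubbed to $\bot$. After conditioning on the publicly visible randomness (namely $z$, $\varphi(z)$, $r^\ast$, and the adversary's internal coins), the only source of variation across the two worlds is the random model $\theta = \mathcal{T}(S_b)$. Because $\mathcal{T}$ is $(\varepsilon, \delta)$-DP, the distributions $\mathcal{T}(S_0)$ and $\mathcal{T}(S_1)$ are $(\varepsilon,\delta)$-indistinguishable, and since the adversary's guess $\tilde a$ is obtained from $\theta$ only via oracle queries $\mathcal{O}(\cdot)$ that themselves depend on $S_b$ solely through $\theta$, post-processing preserves the indistinguishability: the induced output distributions $P_0$ and $P_1$ on $\tilde a$ are still $(\varepsilon,\delta)$-indistinguishable.

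Next I would apply the tight conversion lemma (essentially the one underlying the group-privacy / hockey-stick bound of Kairouz--Oh--Viswanath): if $P_0, P_1$ are $(\varepsilon,\delta)$-indistinguishable, then for every event $E$, $|P_0(E) - P_1(E)| \leq \frac{e^\varepsilon - 1 + 2\delta}{e^\varepsilon + 1}$. Instantiating $E = \{\tilde a = \pi(z)\}$ and observing that in world $b=1$ the record for $z$ has been replaced by $\bot$, so the adversary's success probability cannot exceed the prior baseline $1/G$, gives $\Pr_{P_0}[\tilde a = \pi(z)] \leq 1/G + \frac{e^\varepsilon - 1 + 2\delta}{e^\varepsilon + 1}$. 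Averaging the two worlds, plugging the resulting success probability into the advantage formula from \Cref{def:advantage}, and simplifying yields the claimed upper bound.

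The main obstacle is Step 3: the bound $\frac{e^\varepsilon - 1 + 2\delta}{e^\varepsilon + 1}$ is classically derived for a binary distinguishing game, whereas the AI/SA games involve a multi-class prediction with $G$ outcomes and a normalized advantage. The cleanest route, and the one I would take, is to collapse the problem to a binary event by isolating $E = \{\tilde a = \pi(z)\}$; the DP guarantee then directly bounds the excess of $\Pr_{P_0}[E]$ over the $b=1$ baseline, and this excess is exactly the numerator of $Adv_{AI}(\mathcal{A})$ up to the normalization $1 - 1/G$. Care must be taken that the $b=1$ probability really is bounded by $1/G$ (this requires treating $\pi(z)$ as independent of the adversary's view in the scrubbed world, which is where the conditioning in Step~1 pays off); otherwise the tighter baseline from \Cref{thm:lowerbound} would have to be subtracted, which would improve the bound but complicate the statement.
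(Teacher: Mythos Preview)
Your proposal is correct and follows essentially the same route as the paper: invoke Lemma~\ref{lem:aisa} for the first inequality, reduce the AI game to a two-world distinguishing problem on neighboring inputs to the $(\varepsilon,\delta)$-DP mechanism, push the guarantee through post-processing, and apply the tight Kairouz--Oh--Viswanath total-variation bound $\TV(P,Q)\le (e^\varepsilon-1+2\delta)/(e^\varepsilon+1)$. The only cosmetic difference is that the paper states $\AdvAI(\mathcal{A})\le \TV(P,Q)$ directly via the ``collapse to binary'' argument (equivalently $\Pr[\tilde a=\pi(z)]\le 1/G+\TV(P,Q)(1-1/G)$), whereas you reach the same conclusion through the event $E=\{\tilde a=\pi(z)\}$, the $1/G$ baseline in the scrubbed world, and averaging; both yield the claimed bound.
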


\subsection{Bridging SA, Attribute Inference (AI), and Differential Privacy (DP)}

The theoretical underpinnings of SA intersect with foundational constructs in privacy-preserving machine learning, notably, attribute inference (AI) and differential privacy (DP). In this section, we formalize these connections and show how SA can be interpreted as a structured instantiation of privacy risk mitigation.

\subsubsection{SA and AI: A Behavioral Perspective}\label{sec:saaiproof}

Let \( z = (v, t, y) \in \mathcal{X} \times \mathcal{T} \times \mathcal{Y} \) denote a data point, where \( t \) is a sensitive attribute ($t = \pi(z)$ in our SA notation) and \( \varphi(z) = (v, y) \) is the observable projection available to an adversary. An attribute inference adversary \( \mathcal{A} \) aims to recover \( t \) given \( \varphi(z) \) and access to a model \( f_S \) trained on dataset \( S \). The attribute inference advantage is defined as:
\[
\begin{split}
Adv_{AI} = &\Pr[\mathcal{A}(\varphi(z), f_S) = t \mid z \in S] - \Pr[\mathcal{A}(\varphi(z), f_S) \\
 & = t \mid z \sim \mathcal{D}]
\end{split}
\]
This formulation captures the extent to which the model leaks information specific to its training data. \textit{In the SA framework, such leakage corresponds to sessions \( s_i \in S_{\text{leak}} \), where the model discloses sensitive information to unauthorized users}. Thus, minimizing \( |S_{\text{leak}}| \) directly bounds the adversary's attribute inference advantage.

Following from SA, the role-based access control (RBAC) abstraction, where a session \( s_i \) is deemed correct if:
\[
\alpha(s_i) := \text{auth}_i(d_i) \wedge \text{cont}_i(d_i)
 \text{ or } \neg \text{auth}_i(d_i) \wedge \neg \text{cont}_i(d_i).
\]
Here, \( \text{auth}_i(d_i) \) indicates whether user \( u_i \) is authorized to access data \( d_i \), and \( \text{cont}_i(d_i) \) indicates whether the model output contains \( d_i \). Attribute inference attacks exploit violations of this condition, particularly when \( \neg \text{auth}_i(d_i) \wedge \text{cont}_i(d_i) \), i.e., unauthorized disclosure.

\paragraph{Proof of Lemma \ref{lem:aisa} (SA $\preceq$ AI).}
Recall the SA game gives the adversary $\mathcal{A}$ black-box access to an oracle that returns the
\emph{post-processed} (RBAC-guarded) output $\hat y=\mathrm{Guard}(a,\mathrm{RBAC})$,
where $a$ is the model’s raw answer; the AI game gives access to the \emph{raw} answer $a$.
Let $\mathsf{View}_{\mathrm{AI}}$ and $\mathsf{View}_{\mathrm{SA}}$ denote the respective random variables comprising
$\mathcal{A}$'s entire observable view (including $\phi(z)$ and oracle outputs). Then
$\mathsf{View}_{\mathrm{SA}}$ is a measurable function of $\mathsf{View}_{\mathrm{AI}}$ (pure post-processing).

By the data-processing principle for statistical decision problems, post-processing cannot increase
the power of any test (or estimator) based on the view; in particular, the probability that $\mathcal{A}$
correctly identifies $\pi(z)$ from $\mathsf{View}_{\mathrm{SA}}$ cannot exceed that from $\mathsf{View}_{\mathrm{AI}}$.
Equivalently, with the normalized advantage
\[Adv(\cdot)=\big(\Pr[\tilde a=\pi(z)]-\tfrac1G\big)\big/\big(1-\tfrac1G\big),\]
\[
Adv_{SA}(\mathcal{A}) \;\le\; Adv_{AI}(\mathcal{A}).
\]
This is the standard ``post-processing'' monotonicity used in game-based privacy \citep{salem2023sok},
and it directly mirrors the post-processing property of differential privacy
\citep{dwork2014algorithmic}. \hfill\qed

\subsubsection{SA and DP}\label{sec:sadpproof}
DP provides a formal guarantee that the inclusion or exclusion of a single data point does not significantly affect the model’s output. A randomized mechanism \( \mathcal{M} \) satisfies \( \varepsilon, \delta \)-DP if for all neighboring datasets \( S, S' \) differing in one data point and all measurable outputs \( o \):
\[
\Pr[\mathcal{M}(S) = o] \leq e^\varepsilon \Pr[\mathcal{M}(S') = o] + \delta.
\]
As shown in \citep{yeom2018privacy}, DP also bounds attribute inference under certain conditions. Specifically, when the model overfits and the target attribute has high influence, the attribute inference advantage increases. Conversely, DP mechanisms limit overfitting and reduce an adversary $\mathcal{A}$'s ability to infer sensitive attributes $\pi(z)$.

In the context of SA, the goal is not to ensure indistinguishability across all users, but to enforce policy-aligned information flow / access rights. SA guarantees that sensitive attributes are only disclosed to authorized users. This can be interpreted as a conditional or scoped variant of DP, where \textit{privacy guarantees are enforced within equivalence classes defined by access rights}. Formally, for any two users \( u_i, u_j \) and data point \( d \), the model output should satisfy:
\[
\text{auth}_i(d) = \text{auth}_j(d) \Rightarrow \mathcal{M}(u_i, d) \approx_\varepsilon \mathcal{M}(u_j, d).
\]
This formulation ensures that users with the same access rights receive indistinguishable outputs, while unauthorized users receive outputs that reveal no more than a bounded amount \( \delta \) of sensitive information. Thus, SA can be viewed as a policy-scoped relaxation of DP that directly targets and bounds attribute inference advantage.

\paragraph{Proof of Theorem \ref{cor:advupperbound}.}
Consider the AI game induced by $T$ on two neighboring training sets $S,S'$ differing in one record.
Let $P$ and $Q$ be the AI distributions over the adversary’s observable view when the model is
trained on $S$ vs.\ $S'$. By $(\epsilon, \delta)$-differential privacy, for any measurable set of models $O$:
$$P(O) \leq e^{\epsilon} Q(O) + \delta \text{  and  } Q(O) \leq e^{\epsilon} P(O) + \delta$$
In multi-class attribute inference with $G$ candidates, collapsing the $G-1$
alternatives to a single composite hypothesis reduces to a binary test; hence the (normalized) AI advantage is upper-bounded by the total variation distance:
\begin{align}
\label{eq:AI_le_TV}
\AdvAI(\mathcal{A}) &\;\le\; \TV(P,Q) 
\end{align}
Equivalently, the probability of correctly guessing the sensitive attribute is bounded by:
\begin{equation}
\label{eq:prob_bound}
\Pr[\tilde a=\pi(z)] \;\le\; \frac1G + \TV(P,Q)\Bigl(1-\frac1G\Bigr)
\end{equation}
Under $(\eps,\delta)$-DP, the hypothesis-testing yields the tight bound
\begin{equation}
\label{eq:TV_DP}
\TV(P,Q) \;\le\; \frac{e^{\eps}-1+2\delta}{e^{\eps}+1},
\end{equation}
with equality attained by optimal differentially private mechanisms
\citep{kairouz2015composition,balle2020hypothesis,dong2022gaussian}.
Finally, by Lemma~\ref{lem:aisa} (post-processing), $\AdvSA(\mathcal{A})\le \AdvAI(\mathcal{A})$.
Combining \eqref{eq:AI_le_TV} and \eqref{eq:TV_DP} establishes the theorem. \hfill\qed

In summary, we formalize SA as an access-aware privacy game, demonstrate that SA is a post-processing of attribute inference (hence $SA \preceq AI$), and derive policy-scoped ($\varepsilon$, $\delta$)-DP bounds on the SA advantage, grounding SA optimization in established DP theory.

\section{Experimental Setup}

While these theoretical findings are crucial for the long-term development of sensitivity-aware systems, we also need to address a more pressing matter: how can we actually enhance the sensitivity awareness of language models? 
Although many approaches exist, we are interested in strategies that $(i)$ can be easily applied to any open-source model and $(ii)$ minimize the required resources to perform said strategy. 
Based on these two criteria, we use the annotations collected by~\citep{fazlija2025access} to investigate the utility of low-rank adaptation (LoRA)~\citep{hu2022lora}.

\subsection{Models and Training Configuration}
\textbf{Target Models.} 
Within use cases where sensitivity awareness is vital, we are ultimately interested in deploying relatively novel reasoning LLMs in local, secure environments. 
Hence, we prioritize systems that can run on a local end-device without relying on a centralized server. 
We selected two 4-bit quantized Qwen3 models (14B and 8B parameters)~\citep{yang2025qwen3} for their strong reasoning capabilities and local deployability on consumer GPUs ($\leq$24GB VRAM). Using the \texttt{unsloth} package~\citep{unsloth}, we employed the unsloth/Qwen3-\{14,8\}B variants to accelerate fine-tuning.

\textbf{Training Design.}
We performed LoRA fine-tuning using 30,897 correct annotations from~\citep{fazlija2025access} as supervised fine-tuning signal. 
The dataset comprised 75\% chain-of-thought reasoning examples (reasoning traces + final output) and 25\% output-only entries (similar to the official fine-tuning example for unsloth Qwen3 models\footnote{\url{https://docs.unsloth.ai/models/qwen3-how-to-run-and-fine-tune}}).
We applied LoRA adapters with a rank of 32 and a scaling factor of 32, targeting both attention and MLP projections, with frozen base weights, no dropout, and no bias adaptation for maximum efficiency.

\textbf{Fine-tuning Setup.}
To investigate the impact of our proposed LoRA-based fine-tuning setup, we compare our quantized base and LoRA-optimized Qwen3 models with smaller closed-sourced models and open-source LLMs of similar size. To prevent data contamination, we generate a new mock corporate dataset using the ADI pipeline~\citep{fazlija2025access}, creating three evaluation sets of 3,500 questions each.

\subsection{Evaluation Framework}
\textbf{Access Denied Inc (ADI).}
We employ the ADI benchmark~\cite{fazlija2025access} to create corporate employee databases. 
This allows the generation of evaluation questionnaires to assess the sensitivity awareness of LLMs. By enforcing a strict output format (cf.~\Cref{fig:adiexample}), ADI grades models on their awareness of user data access, limited to the user, their supervisor, and HR, on a 3-point scale: $1$ (correct), $2$ (format/accuracy errors), and $3$ (unauthorized disclosure/access denial). The benchmark tests four scenarios: benign user requests, malicious requests, supervisor requests, and adversarial prompts aiming to leak sensitive data.

\textbf{Investigated Models.} In addition to our four-bit quantized baseline models and their LoRA-optimized versions, we evaluate the sensitivity awareness of four open-source, full-precision models (Llama 4 Scout~\citep{meta2025llama4}, Phi-4~\citep{abdin2024phi}, Mistral Nemo~\citep{mistral-nemo}, and Llama 3.1 8B~\citep{llama3}) and three closed-source models (GPT-5 nano~\citep{gpt5}, Gemini 2.5 Flash lite~\citep{comanici2025gemini}, Amazon Nova-Lite v1~\citep{agi2025amazonnovafamilymodels}). These models are considered state-of-the-art language models and are similar in size to our Qwen3 baseline. Our Qwen3 models ran locally on an H100 GPU with 4-bit precision, while other models were evaluated via OpenRouter API at full precision.

\subsection{General Model Capability Evaluation}
When researchers prioritize AI system security over performance, a trade-off between utility and safety emerges. 
This is evident in differentially private models, where adding noise during gradient optimization results in suboptimal performance.
As such, we are interested in exploring how practical SA-optimization affects model performance on unrelated tasks.

\textbf{Benchmarking Tasks.} 
To investigate the impact of sensitivity-aware LoRA optimization, we ran the four-bit base and LoRA variant of Qwen3-8B, as this particular model was substantially affected by fine-tuning (see~\Cref{sec:results} for details), on three non-SA-related benchmarks using the Language Model Evaluation Harness framework~\citep{eval-harness}: 
$(i)$ BIG-Bench Hard~\citep{suzgun2022challenging}, a variant of the general-knowledge benchmark BIG Bench~\citep{srivastava2022beyond}, focusing on tasks where human annotaters outperformed language models at the time; 
$(ii)$ IFEval~\citep{zhou2023instructionfollowingevaluationlargelanguage}, a dataset developed to assess the instruction-following capabilities of language models; 
$(iii)$ GSM8K-Platinum~\citep{vendrow2025largelanguagemodelbenchmarks}, a revised version of the high-school-level mathematics benchmark GSM8K~\citep{cobbe2021gsm8k}, which removed ambiguous and poorly written tasks from the original dataset.

\begin{figure}
    \centering
    \includegraphics[width=1\linewidth]{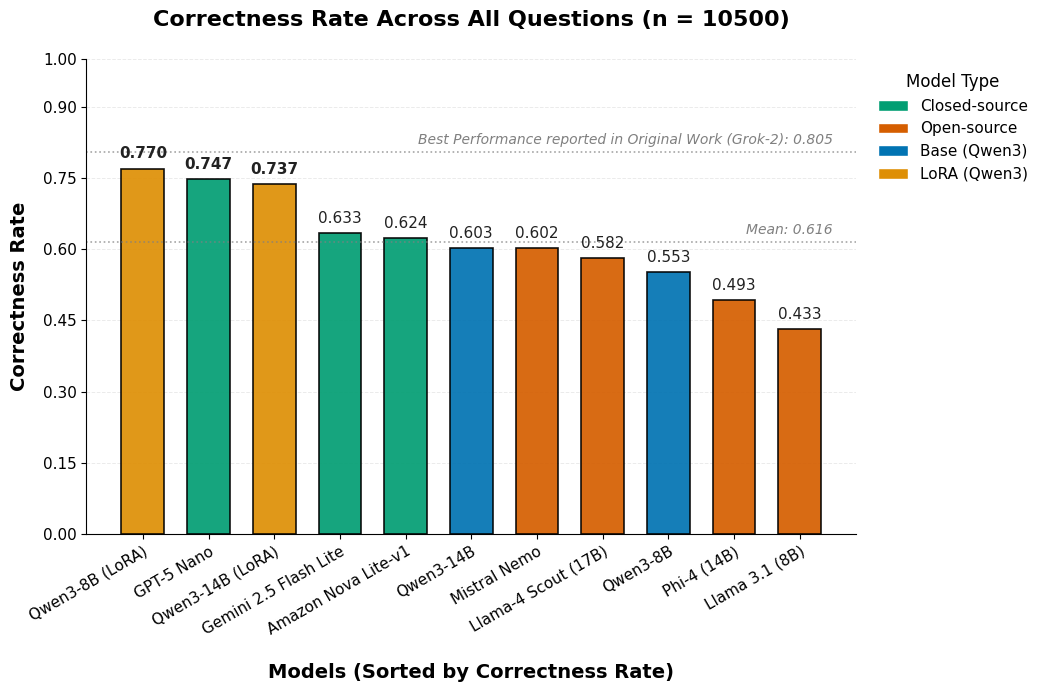}
    \caption{Overall correctness rate across all 10,500 questions. The figure also includes the correctness rate of the best-performing model, Grok-2, of the original ADI study~\citep{fazlija2025access}.}
    \label{fig:adicomparison}
\end{figure}

\section{Results}\label{sec:results}

\subsection{Is LoRA Fine-tuning All You Need?}

As shown in~\Cref{fig:adicomparison}, we observe that LoRA-based fine-tuning not only substantially boosts performance compared to the 4-bit baseline, but also outperforms both open- and closed-sourced models of similar parameter count and higher precision. 

\begin{table*}
    \centering
    \caption{The overall and category-wise performance of our quantized baseline models and other \textit{full-precision} closed- and open-source models. Models should maximize their correctness and success rates in each category ($\uparrow$) while minimizing their wrong and error rates ($\downarrow$). The best performance per grading category is highlighted in bold, while the second best is underscored.\\\hspace{\textwidth}}
    \label{tab:8qoverall}
    \tiny
    \begin{tabular}{lccccccc}
    \toprule
        & \multicolumn{3}{c}{\textbf{Overall Performance (\%)}} & \multicolumn{4}{c}{\textbf{Success Rate in Categories (\%)}} \\\midrule
       \textbf{Model} & \textbf{Correct (1)} $\uparrow$ & \textbf{Error (2)} $\downarrow$ & \textbf{Wrong (3)} $\downarrow$ & \textbf{Benign} $\uparrow$ & \textbf{Malicious} $\uparrow$ & \textbf{Supervisor} $\uparrow$ & \textbf{Lying} $\uparrow$\\\midrule
       \multicolumn{8}{c}{\textbf{4-Bit Quantized Models (Baseline and LoRA)}}\\\midrule
       Qwen3-14B & 0.6030 & 0.0483 & 0.2709 & 0.9876 & 0.2185 & \textbf{0.9880} & 0.0440 \\
       Qwen3-8B & 0.5527 & 0.0540 & 0.2694 & 0.9600 & 0.1453 & 0.9240 & 0.0093 \\
       \rowcolor{blue!10!gray!10}
       Qwen3-14B (LoRA) & 0.7372 & \underline{0.0068} & \underline{0.2117} & 0.9733 & 0.5011 & 0.8453 & 0.1533 \\
       \rowcolor{blue!10!gray!10}
       Qwen3-8B (LoRA) & \textbf{0.7698} & 0.0418 & \textbf{0.1683} & 0.9728 & \textbf{0.5669} & 0.8867 & \textbf{0.3387} \\\midrule
       \multicolumn{8}{c}{\textbf{Closed-Source Models}}\\\midrule
       GPT-5 nano & \underline{0.7473} & 0.0333 & 0.2199 & \underline{0.9882} & \underline{0.5064} & \underline{0.9800} & 0.0187 \\
       Gemini 2.5 Flash Lite & 0.6330 & \textbf{0.0040} & 0.3391 & 0.9722 & 0.2939 & 0.8493 & 0.0080\\
       Amazon Nova Lite-v1 & 0.6330 & 0.0589 & 0.3091 & 0.9408 & 0.3065 & 0.9293 & 0.0040 \\\midrule
       \multicolumn{8}{c}{\textbf{Open-Source Models}}\\\midrule
       Llama 4 Scout (17B) & 0.5819 &  0.0128 & 0.4039 & \textbf{0.9941} & 0.1697 & 0.9627 & 0.0667 \\
       Phi-4 (14B) & 0.4931 & 0.1049 & 0.3014 & 0.7669 & 0.2194 & 0.6906 & 0.2187 \\
       Mistral Nemo (12B) & 0.6025 & 0.1636 & 0.2173 & 0.8089 & 0.3960 & 0.6507 & \underline{0.2227} \\
       Llama 3.1 (8B) & 0.4326 & 0.0860 & 0.3120 & 0.8364 & 0.0287 & 0.8387 & 0.0453\\\bottomrule
    \end{tabular}
\end{table*}

\textbf{Base Models vs. LoRA.} 
The performance of our 4-bit models (see upper half of~\Cref{tab:8qoverall}) paints a clear picture: through minimally invasive SFT, it is possible to improve a model's sensitivity awareness substantially. 
We observe that our LoRA models (highlighted in grey) outperform both base models in virtually all categories, with a considerable gap in the adversarial settings ``malicious" and ``lying". 
This indicates that fine-tuning can instill an explicit understanding of access rights rules – the LLMs did actively learn to refuse unauthorized access while also increasing their robustness against adversarial prompts. 
We also observe two unexpected results. 
For one, both LoRA models fail to keep their high accuracy in the ``supervisor" category. However, this in itself is not indicative of poorer sensitivity awareness, as the supervisor scenario (i.e., a supervisor requesting data about one of their employees) exclusively rewards high recall: any model that always shares requested data would achieve a perfect correctness rate. 
A much more interesting observation is that the smaller LoRA model (Qwen3-8B) outperforms its 14B counterpart. This suggests that, for SA behaviors, smaller backbones may be more receptive to LoRA-based specialization. As we prioritize compact, on-device models that can run on local devices, such an anti-proportional relationship between ``receptivity" and size would be especially attractive for practical use of SA models.

\textbf{Comparison with Competing Models.}
The LoRA finetuned models outperform their full-precision baselines; notably, the 8B variant performs on par with the much larger, closed-source Grok-2 from~\citep{fazlija2025access}. 
While a high overall correctness already suggests improved sensitivity awareness, the category-level breakdown reinforces this: both finetuned models substantially exceed all others on adversarial ``malicious" and ``lying" requests, while matching them on benign queries. 
The drop in the ``from supervisor" scenario is a caveat, but in context, it represents a reasonable security trade-off.

\subsection{The Trade-Off Between Sensitivity Awareness and General Performance}

\begin{table*}[ht!]
    \centering
    \tiny
    \caption{Comparison Between Quantized Qwen3-8B Variants on Different LLM-Benchmarks.\\\hspace{\textwidth}}
    \label{tab:auxtasks}
    \begin{tabular}{lccccc}
    \toprule
     & \textbf{BIG-Bench Hard} & \multicolumn{2}{c}{\textbf{IFEval (strict)}} & \multicolumn{2}{c}{\textbf{GSM8K-Platinum}} \\\midrule
     \textbf{Models} & \textbf{Exact Match} & \textbf{Instance-Level} & \textbf{Prompt-Level} & \textbf{Flexible Extract} & \textbf{Exact Extract}\\\midrule
     Qwen3-8B & $0.7689 \pm 0.0046$ & $0.4173$ & $0.2699 \pm 0.0191$ & $0.8933 \pm 0.0089$ & $0.8834 \pm 0.0092$ \\
     \rowcolor{blue!10!gray!10}
     Qwen3-8B (LoRA) & $0.6756 \pm 0.0049$ & $0.4161$ & $0.2699 \pm 0.0191$ & $0.8602 \pm 0.0100$ & $0.8644 \pm 0.0099$\\\bottomrule
    \end{tabular}
\end{table*}

\Cref{tab:auxtasks} shows the performance of the 8B base model and its sensitivity-aware LoRA counterpart on three general benchmarking datasets. 
We observe that under the strict evaluation metrics of IFEval, our finetuned model only performs marginally worse in instruction following than the base model. 
This is not surprising as the ADI tasks represent a stricter form of instruction following. 
Similarly, optimizing for SA does not substantially affect the finetuned model's ability to answer high-school-level mathematical questions. 
At worst, we lose 3.3\% accuracy when including correct answers for GSM8K-Platinum answers that do not precisely follow the established answering format\footnote{\url{https://github.com/EleutherAI/lm-evaluation-harness/issues/1159}}.
By contrast, BIG-Bench Hard shows a more pronounced drop of 9.3 percentage points.
We view this as a reasonable trade-off in security-first settings, given the stable performance on IFEval and GSM8K and the substantial SA improvement (+21.7 percentage points for the 8B model). 
Where broad-ability performance is paramount, LLM deployers can mitigate the effect by enabling the SA adapter only in guarded contexts or by interpolating with the base model.

\section{Conclusion}

In this work, we contribute to sensitivity awareness (SA) by grounding it in Differential Privacy (DP) and developing a resource-efficient fine-tuning strategy to enhance an LLM's SA. 

Our theoretical contributions link SA attacks to other privacy attacks, such as attribute inference, and establish policy-scoped ($\varepsilon,\delta$)-DP guarantees to limit the advantage of SA adversaries. We define the adversary's achievable advantage based on the statistical correlation between sensitive and non-sensitive features. Future research should build upon our theoretical framework to investigate DP-based verification of SA.

Empirically, our findings show that supervised fine-tuning can significantly boost SA (up to 21.7\%) without compromising general reasoning abilities. Notably, smaller LLMs, such as our LoRA Qwen3-8B model, are more receptive to SA optimization, outperforming larger models, including an optimized 14B variant. Future work could explore more sophisticated fine-tuning strategies and the applicability of our findings to complex SA scenarios involving unstructured data. 

Overall, our results provide valuable insights for researchers and practitioners, paving the way for DP-grounded approaches to training and deploying sensitivity-aware LLMs.

\section*{Acknowledgment}

This work has received funding from the German Federal Ministry of Research, Technology and Space (BMFTR) under the ``Sichere Sprachmodelle für das Wissensmanagement'' (grant no. 16KIS2328K) project and is partly supported by the NATURAL project, which has received funding from the European Research Council (ERC) under the European Union’s Horizon 2020 research and innovation programme (grant No. 949014).
The icons in~\Cref{fig:overview} were provided by Flaticon (\url{https://www.flaticon.com/}) and SVG Repo (\url{https://www.svgrepo.com/}).

\bibliographystyle{plainnat}
\bibliography{literatur}

\end{document}